\long\def\comment #1\commentend{} \long\def\commabs #1\commabsend{}
\long\def\commful #1\commfulend{#1}
\newtheorem{theorem}{Theorem}
\newtheorem{lemma}{Lemma}
\newproof{proof}{Proof}
\newdefinition{definition}{Definition}
\newtheorem{remark1}{Remark}
\newcommand{\remove}[1]{}
\journal{Theoretical Computer Science}
\begin{document}

\begin{frontmatter}



\title{Faster Treasure Hunt and\\ Better Strongly Universal Exploration Sequences\tnoteref{tnot1}}
\tnotetext[tnot1]{A preliminary version of this paper appears in the Proceedings of the 18th International Symposium on Algorithms and Computation, ISAAC 2007.}


\author{Qin Xin\corref{cor1}\fnref{cor2}}
%

\address{Faculty of Science and Technology, University of the Faroe Islands, Torshavn, Faroe Islands}

\ead{qinx@setur.fo}

\cortext[cor1]{Correspondence to: Faculty of Science and Technology, University of the Faroe Islands, Noatun 3, FO 100, Torshavn, Faroe Islands. Tel.: +298 352575.}
\fntext[cor2]{Part of this research was performed while this author was a senior research fellow at Universit$\acute{e}$ catholique de
Louvain, Belgium, a scientist/postdoctoral research fellow at Simula Research Laboratory and a postdoctoral research fellow
at The University of Bergen, Norway. This work was partially supported by Visiting Scholarship of State Key Laboratory of
Power Transmission Equipment $\&$ System Security and New Technology (Chongqing University) (2007DA10512710408).}


\begin{abstract}
In this paper, we investigate the explicit deterministic treasure hunt problem
in a $n$-vertex network. This problem was firstly
introduced by Ta-Shma and Zwick in \cite{TZ07} [SODA'07].
Note also it is a variant of the well known rendezvous problem in which one of the robot
(the treasure) is always stationary.
In this paper, we propose an $O(n^{c(1+\frac{1}{\lambda})})$-time algorithm for the treasure hunt problem, which significantly
improves the currently best known result of running time $O(n^{2c})$ in \cite{TZ07},
where $c$ is a constant induced from the construction of an universal exploration sequence
in \cite{R05,TZ07}, and $\lambda \gg 1$ is an arbitrary large, but fixed,  integer constant. 
The treasure hunt problem also motivates the
study of strongly universal exploration sequences. In this paper, we also propose a much better explicit construction for strongly universal exploration sequences compared to the one in \cite{TZ07}.
\end{abstract}

\begin{keyword}
Design and analysis of algorithms, distributed computing, graph searching and robotics, rendezvous,
strongly universal exploration sequences.

\end{keyword}

\end{frontmatter}


\section{Introduction}

In the {\sl rendezvous problem} (\cite{DFKP06,KM06,TZ07}),
two robots are placed in an unknown environment
modeled by a finite, connected, undirected graph $G =(V,E).$
We assume that $|V | = n.$ The size of the
network, i.e., the number of vertices in the graph
is not known to the robots.
The edges incident on a vertex $u\in V$ are numbered
$0, 1, 2,\cdots, deg(u)-1,$ in
a predetermined manner, where $deg(u)$ is the degree
of $u.$ In general, the numbering is not assumed to be
consistent, i.e., an edge $(u, v)\in E$ may be the $i$-th edge
of $u$ but the $j$-th edge of $v$, where $i\not= j.$

When a robot is in a vertex $u\in V$ it is told the
degree $deg(u)$ of $u.$ However, all vertices of the same degree are
not distinguishable. The robots are not allowed to
put any information such as tokens or markers at the vertices that they visit.
At any time step a robot is only allowed to either traverse an
edge, or stay in place. When the robot is at a vertex $u$ it
may ask to traverse the $i$-th edge $(u, v) \in E$ of $u$, where
$0 \le i \le deg(u)-1.$ The robot observes itself at vertex $v,$ the
another endpoint of this edge. As described before, the $i$-th edge
of $u$ is the $j$-th edge of $v,$ for some $0 \le j \le deg(v)-1.$ In
general $j \not= i.$

There are two different variants of the model used in the field.
In the first one,
the robot is told the index $j$ of the edge it used to
enter $v$.
This allows the robot to return to $u$ at the next
step, if it wants to do so. This variant of the problem
is called the rendezvous problem with backtracking. In the second
variant of the model, the robot observes
itself at vertex $v$ without knowing which edge it used to get
there. We call this variant of the problem the general
rendezvous problem.

Same as most of work \cite{DFKP06,KM06,TZ07} described,
the main strategy is to give the two robots deterministic
sequences of instructions which will gurrantee that
two robots
would eventually meet each other, no matter in which graph they
are located, and no matter when they are activated. It is,
however, expected that such a meeting would happen
as soon as possible.
A robot is unaware of the whereabouts of another
robot, even it is very close to another one
in the graph. The two robots  meet only
when they are both active and are at same vertex
at same time. In particular, the two robots may
traverse the same edge but in different directions, and still
miss each other.

For the deterministic solutions, it has to be assumed
that two robots have different labels,
e.g. $L_1 \not= L_2.$ Without such an assumption
there is no deterministic way of breaking symmetry and
no deterministic strategy is possible.
An example has been shown in \cite{TZ07} if the two robots are completely
identical. Assume $G$ is a ring on $n$ vertices and
that the edges are labeled so that out of every vertex,
edge $0$ goes clockwise, while edge $1$ goes anti-clockwise.
If the two robots start the same time at different
vertices and follow the same instructions, they would
never meet! Same as the previous work \cite{TZ07}, We also assume that
the moves of the two robots are synchronous after both of them are activated.
The crutial feature of this problem is
that the two robots may be activated at different times which decides
arbitrarily by the adversary. A meeting can happen only when both robots
are active. The time complexity of any solution is bounded by
the number of steps used to complete such a task, which counts from the
activation of the second robot.

The {\sl treasure hunt problem} is a variant of the rendezvous
problem in which the robots are assigned the labels
0 and 1 and robot 0, the treasure, cannot move, which firstly introduced in \cite{TZ07}.
As in the rendezvous problem, the treasure and the seeking robot
 are not necessarily activated at the same
time.

\subsection{Previous work}

Dessmark {\sl et al.} \cite{DFKP06}
presented a deterministic solution of the rendezvous
problem which guarantees a meeting of the two
robots after a number of steps which is polynomial in $n$,
the size of the graph, $l$, the length of the shorter of the
two labels, and $\tau$, the difference between their activation
times.  More specifically, the bound on
the number of steps that they obtain is $\tilde{O}(n^5\sqrt{\tau l}+n^{10}l).$
In the same paper, Dessmark {\sl et al.} \cite{DFKP06} also
ask whether it is possible to obtain
a polynomial bound that is independent of $\tau$.
Kowalski and Malinowski \cite{KM06} have recently presented
a deterministic solution to the rendezvous
problem that guarantees a meeting after at
most $\tilde{O}(n^{15}+l^3)$ steps, which is independent of $\tau$, and also
firstly answer the open problem of \cite{DFKP06} when backtracking is allowed.
Very recently in \cite{TZ07}, Ta-Shma, and Zwick
propose a deterministic solution that guarantees a rendezvous
within $\tilde{O}(n^5l)$ time units after the activation of the second
robot, and also uses backtracking. This is the currently best known solution.
All the solutions mentioned above rely on the existence of a
universal traversal sequences, introduced by Aleliunas
{\sl et al.} \cite{AKL+79}, and are therefore non-explicit.
The first explicit  solution for both rendezvous problem and treasure hunt
problem can be found in \cite{TZ07}.
This work allows backtracking, by using the explicit
construction of a {\sl strongly universal exploration sequence SUES}. The time complexity
of the solutions for both problems is $O(n^{2c})$,  where
$c$ is a huge constant.
Other variants of the rendezvous problem could be found in \cite{DGK+06}.

Note if
randomization is allowed, then both the rendezvous problem and the treasure hunt problem
have the trivial solutions using a polynomial number of steps in term of the size of the
graph with high probability, e.g. a random walk by Coppersmith {\sl et al.} \cite{CTW93}.

\subsection{Our results}

We mainly study here the explicit deterministic treasure hunt problem with backtracking
in a $n$-vertex network.
It is a variant of the well known rendezvous problem in which one of the robot
(the treasure) is always stationary.
We propose an $O(n^{c(1+\frac{1}{\lambda})})$-time algorithm for treasure hunt problem, which significantly
improves currently best known result with running time $O(n^{2c})$ in \cite{TZ07},
where $c$ is a constant induced from the construction of an universal exploration sequence
in \cite{R05,TZ07}, $\lambda \gg 1$ is a constant integer.
The treasure hunt problem also motivates the
study of strongly universal exploration sequences.
In this paper, we also propose a much better explicit construction
for strongly universal exploration sequences (SUESs) compared to the one in \cite{TZ07}.
The improved explicit {\sl SUESs} could be also used to improve the explicit solution of the rendezvous problem in \cite{TZ07}. 

\section{The treasure hunt problem}

The treasure hunt problem is a variant of the well known
rendezvous problem in which one of the robots, the treasure,
is always stationary.
A seeking robot and the treasure are  placed in an unknown
location in an unknown environment, modeled again
by a finite, connected, undirected graph.  Same as
in the rendezvous problem, the treasure and the seeking robot
searching for it are not necessarily activated at the same
time. The most difficult case of the problem is when the seeking robot is
activated before the treasure.

To clarify our presentation, we give a formal definition for the
treasure hunt problem, which is a modified version from the rendezvous
problem by Ta-Shma and Zwick \cite{TZ07}.

Formally, a deterministic solution for the general
treasure hunt problem (without backtracking) is a deterministic
algorithm that computes a function $f : Z^+\times Z^+ \rightarrow Z^+,$
where for $d\ge 1$ and $t\ge 0$ we have $0\le f(d, t)\le d-1.$
This function defines the walk carried out by the seeking robot
as follows: at the $t$-th time unit since
activation, when at a vertex of degree $d,$
use edge number $f(d, t)$ to walk in the next step.

A deterministic solution for the treasure hunt problem
with backtracking is a deterministic algorithm that
computes a function $f : Z^+\times Z^+ \times (Z^+)^\ast\rightarrow Z^+,$
where for every $d\ge 1$, $t\ge 0$, and $T \in (Z^+)^\ast$ we
have $0\le f(d, t, T)\le d-1.$ This function defines the
walk carried out by the seeking robot as follows:
at the $t$-th time unit since activation, when at a vertex
of degree $d,$ if the sequence of edge numbers assigned
to the edges that were used to enter the vertices at the
previous time units is $T \in (Z^+)^{\ast}$ the robot will
exit the current node
using the edge number $f(d, t, T),$ in the next step.
In our solutions that use this model, the
function $f$ depends on $T$ only through its last element, which is the same as
\cite{TZ07}.

Throughout most of this paper we shall assume that
the graph $G$ in which the robots (seeking robot and the treasure robot)
are placed is a $d$-regular
graph, for some $d\ge 3$. Note it is easy to
extend the solutions given for the $d$-regular graphs to
general graphs using the ideas from \cite{DFKP06}.

\section{Universal and strongly universal exploration sequences}

For clarity of presentation, we use the same definitions as in \cite{TZ07}.

Let $G = (V,E)$ be a $d$-regular graph.
A sequence $\tau_1\tau_2\cdots\tau_k \in \{0,1, 2,$
$\cdots,d-1\}^k$ and
a starting  edge $e_0 = (v_{-1}, v_0)\in E$
define a walk $v_{-1}, v_0, \cdots , v_k$ as follows: For $1\le i\le k,$
if $(v_{i-1}, v_i)$ is the $s$-th edge of $v_i$, let $e_i = (v_i, v_{i+1})$ be
the $(s+\tau_i)$-th edge of $v_i$, where we assume here that the
edges of $v_i$ are numbered $0, 1,\cdots , d-1,$ and that $s+\tau_i$
is computed by modulo $d.$

\begin{definition}
(Universal Exploration Sequences
(UESs) \cite{K02,TZ07})\\ A sequence $\tau_1\tau_2\cdots\tau_l\in
\{0, 1, \cdots, d - 1\}^l$ is a universal exploration sequence
for $d$-regular graphs of size at most $n$ if for every
connected $d$-regular graph $G = (V,E)$ on at most $n$
vertices, any numbering of its edges, and any starting
edge $(v_{-1}, v_0) \in E,$ the walk obtained visits all the
vertices of the graph.
\end{definition}

Reingold \cite{R05} obtains
an explicit construction of polynomial-size {\sl UES:}

\begin{theorem}\label{R05}

(\cite{R05}) There exists a constant
$c\ge 1$ such that for every $d\ge 3$ and $n\ge 1$, a UES of
length $O(n^c)$ for $d$-regular graphs of size at most $n$ can
be constructed, deterministically, in polynomial time.

\end{theorem}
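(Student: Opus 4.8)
The plan is to recognize this as Reingold's logspace connectivity result \cite{R05} in its guise as an explicit universal exploration sequence, and to reconstruct its proof through the zig-zag product of Reingold, Vadhan, and Wigderson. The central observation is that a UES is easy to obtain on an \emph{expander}: if a $D$-regular graph has its second-largest eigenvalue in absolute value bounded by an absolute constant $\mu < 1$, then it has diameter $O(\log n)$, and a structured, graph-independent sequence of edge offsets of length $\mathrm{poly}(n)$ visits every vertex. The difficulty is that an arbitrary connected $d$-regular graph need not be an expander, so the real work is to transform it into one while (i) preserving the component structure and the ability to recover a walk on the original graph, and (ii) doing so by a map that depends only on $n$ and $d$, not on the specific graph, and that is computable in logarithmic space (hence, since $L \subseteq P$, in polynomial time).

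First I would set up the two elementary graph operations that drive the construction. Squaring, $G \mapsto G^2$, leaves the vertex set fixed, squares the degree, and squares the spectral quantity $\mu$, thereby amplifying the spectral gap. The zig-zag product with a fixed constant-size expander $H$ replaces each vertex by a copy of $H$ and keeps the degree blow-up under control; the key RVW inequality guarantees that the zig-zag product of an expander-like graph with $H$ remains expander-like, with an eigenvalue bound that stays below $1$ as long as both factors do. Combining them, the iteration that replaces $G_i$ by the square of the zig-zag product of $G_i$ with $H$ holds the degree at a fixed constant $D$ while driving $\mu$ downward geometrically. A preliminary reduction first regularizes the input to this constant degree by expanding each high-degree vertex into a small gadget, preserving connectivity.

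Next I would prove convergence together with the space bound. Each connected component of the (connected) input has $1-\mu$ bounded below by an inverse polynomial, a standard consequence of connectivity via conductance, so after $k = O(\log n)$ iterations every component becomes an expander with $\mu$ below an absolute constant. The graph $G_k$ is never stored explicitly; instead its rotation map is evaluated on demand by unwinding the $k$ levels of the recursion, and since each level contributes only $O(1)$ bits of carried state and $H$ is a fixed lookup table, the whole evaluation runs in $O(\log n)$ space. The number of vertices of $G_k$ is $n \cdot D^{O(\log n)} = \mathrm{poly}(n)$, so a covering sequence on the expander $G_k$ has polynomial length; projecting each of its steps back through the rotation maps costs only a constant factor per step, yielding a universal exploration sequence for $G$ of length $O(n^c)$ for an absolute constant $c$.

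The main obstacle is the coupled analysis at the heart of the construction: one must simultaneously (a) establish the RVW eigenvalue bound for the zig-zag product in a form strong enough that squaring-plus-zig-zag contracts $\mu$ toward a fixed constant rather than merely keeping it below $1$, and (b) verify that the recursive evaluation of the rotation map of $G_k$ genuinely fits in $O(\log n)$ space, which demands careful accounting of the state passed across the $O(\log n)$ recursion levels alongside the constant-degree invariant that keeps each level's local computation of size $O(1)$. Making both hold under the \emph{same} iteration is the crux. Once they do, the passage from logspace exploration to an explicit polynomial-length UES, and thence to deterministic polynomial-time constructibility, is routine.
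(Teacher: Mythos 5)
This theorem is nowhere proved in the paper: it is imported verbatim as a black box from Reingold \cite{R05} (as the citation in the theorem statement indicates), and the paper's results build on top of it. Your reconstruction---constant-degree regularization, iterated zig-zag-product-plus-powering to force every component to become a constant-degree expander, on-demand logspace evaluation of the rotation map, and pullback of a short exploration sequence on the expander to the original graph---is precisely the proof in that cited source, so your approach coincides with the one the paper relies on (modulo two small glosses: one must also add self-loops to rule out bipartiteness before the spectral-gap lower bound applies, and the pullback multiplies the length by a constant factor per recursion level, hence by a polynomial factor overall, not a constant factor per step, which still yields length $O(n^c)$).
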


\begin{definition}

(Strongly Universal Exploration
Sequences (SUESs) \cite{TZ07}) A possibly infinite
sequence $\tau=\tau_1\tau_2\cdots,$ where $\tau_i \in \{0, 1,\cdots, d - 1\},$ is
a strongly universal exploration sequence (SUES) for
$d$-regular graphs with cover time $p(\cdot),$ if for any $n\ge 1,$
any contiguous subsequence of $\tau$ of length $p(n)$ is a
UES for $d$-regular graphs of size $n.$

\end{definition}

Let $O(n^c)$ be the length of a UES (see \cite{R05,TZ07}),
the main Theorem of this section shows that strongly
universal exploration sequences (SUESs) do exist and
they can be constructed deterministically in polynomial
time with cover time $p(n)=O(n^{c(1+\frac{1}{\lambda})})$, which significantly improves
the currently best known result in \cite{TZ07} with
$p(n)=O(n^{2c})$, where $c$ is a constant induced from the construction of an universal
exploration sequence in \cite{R05,TZ07}, and $\lambda \gg 1$ is an arbitrary large, but fixed,  integer constant.

In this section, we firstly give a weak solution with $p(n)=O(n^{{\frac{3}{2}}c})$, then we show our main
result described above.

\subsection{Explicit SUESs with $p(n)=O(n^{{\frac{3}{2}}c})$}

In this section, we propose a new explicit strongly
universal exploration sequence with cover time $p(n)=O(n^{{\frac{3}{2}}c})$ for
the $d$-regular graphs of size  at most $n,$ where $c$ is the same constant as was used in \cite{R05,TZ07}. It is a weak version (a special case) of our main result in Section \ref{mainresult}, but which gives more intuition on our new approaches in Section
\ref{mainresult}.

\subsubsection{Properties of exploration sequences}

A very useful property of exploration sequences \cite{K02,TZ07} is
that walks defined by an exploration
sequence can be reversed. For $\tau =\tau_1\tau_2\cdots\tau_k$ $\in
\{0, 1,\cdots, d - 1\}^k,$ we let $\tau^{-1} = \tau_k^{-1}\tau_{k-1}^{-1}\cdots
\tau_1^{-1},$ where $\tau_i^{-1}= d - \tau_i$ (mod $d$). It is not difficult to see that a walk
defined by an exploration sequence $\tau$ can be backtracked
by executing the sequence $0\tau^{-1}0.$ Note that
if $e_0, e_1,\cdots, e_k$ is the sequence of edges defined by $\tau$,
starting with $e_0,$ then executing $0\tau^{-1}0,$ starting with $e_k$
defines the sequence $e_k, \tilde{e}_k,\tilde{e}_{k-1},\cdots,\tilde{e}_0,e_0,$ where
$\tilde{e}$ is
the reverse of edge $e.$
Also, if $\tau$ is a universal exploration
sequence for graphs with size at most $n,$ then so
is $0\tau^{-1}$ starting with the last edge defined by $\tau$.

\subsubsection{Construction of SUESs}\label{wm}

Let $U_n$ be a sequence of length $n$ which is a universal
exploration sequence for $d$-regular graphs of size at most
$bn^{\frac{1}{c}},$ for some constants $b>1,$ and $c>1,$
which can be constructed, deterministically, in polynomial time of $n$
due to Theorem \ref{R05}.
We are interested in sequences $U_n$ only if $n$ is a power of $2.$
For the sake of technique, we can construct $U_n = U_1U_1U_2U_4\cdots U_{\frac{n}{2}}.$ Therefore, $U_k$ is a prefix of $U_n,$ for every $k = 2^i$ and $n = 2^j$, where $i < j.$

A strongly
universal exploration sequence $S_n$ is a sequence defined in a recursive manner.
Our approach is based on the similar idea in
\cite{TZ07}, but different interleaving components between the
symbols which originate from $U_n$.  We begin with
$S_1 = U_1.$ Assume that $U_n = u_1u_2\cdots u_n$ and that $n\ge 2.$

Define,
$$ S_n =u_1 S_{r_1} 0 S_{r_1}^{-1} 0 u_2 S_{r_2} 0 S_{r_2}^{-1} 0 u_3\cdots
u_i S_{r_i} 0 S_{r_i}^{-1} 0 u_{i+1}\cdots u_{n-1} S_{r_{n-1}} 0 S_{r_{n-1}}^{-1}0 u_n,$$

\noindent
for every $1\le i <n,$ we set $r_i =\langle i  \rangle$, where $\langle i\rangle
= \max\{2^{j}|2^{{\frac{3}{2}}j} \le i, j\in Z^{+}\}$. 
Similar construction strategy for $U_n$ (e.g., $U_n = U_1U_1U_2U_4\cdots U_{\frac{n}{2}}$) is also adopted to construct $S_n$.
Note that the sequence $S_k$ is a prefix of $S_n$ for every $k=2^p$ and $n=2^j$, where $p<j.$ Moreover, we also assign $r_i=r_{n-i}$, for every $1\le i <n$.
  
Furthermore, the sequence $S^{-1}_n$ differs with $S_n$ only on the symbols
that originate from $U_n$ and in the alignment of the $0'$s:
$$ S^{-1}_n =u_n^{-1} 0 S_{r_1} 0 S_{r_1}^{-1}  u^{-1}_{n-1}\cdots
u^{-1}_{i+1} 0S_{r_{n-i}} 0 S_{r_{n-i}}^{-1} u^{-1}_{i}\cdots u^{-1}_{2} 0 S_{r_{n-1}} 0 S_{r_{n-1}}^{-1} u^{-1}_1.$$

Note that $r_i\le \sqrt[3]{i^2}$.  Thus, if $r_{\frac{n}{2}}=\sqrt[3]{({\frac{n}{2}})^2},$   the first half of $S_n$ is equal to
$S_{\frac{n}{2}}S_{\sqrt[3]{({\frac{n}{2}})^2}}0,$ and ends with a full copy
of $S_{\sqrt[3]{({\frac{n}{2}})^2}}$, followed by a $0$. Similarly, the second
half of $S_n$ starts with a full copy of $S_{\sqrt[3]{({\frac{n}{2}})^2}}^{-1}.$
In the following, we bound the length of $S_n$.

\begin{lemma}

For every $n = 2^j,$ where $j\ge 1,$ $|S_n| < 258n.$

\end{lemma}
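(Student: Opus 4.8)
The plan is to convert the self-similar decomposition recorded immediately before the lemma into a recurrence and then solve it. The stated structural facts---that the first half of $S_n$ equals $S_{n/2}\,S_{\sqrt[3]{(n/2)^2}}\,0$, and that, by the reversal symmetry of the construction together with $r_i=r_{n-i}$ and $|S_m^{-1}|=|S_m|$, the second half is the mirror image of the first---mean that, apart from the two central $0$'s, $S_n$ is assembled from two copies of $S_{n/2}$ and two copies of $S_{r_{n/2}}$, where $r_{n/2}=\langle n/2\rangle$ is the power of $2$ obtained by rounding $\sqrt[3]{(n/2)^2}$ down. Reading off lengths gives, for $n=2^j$,
$$|S_n|\;\le\;2\,|S_{n/2}|\;+\;2\,|S_{r_{n/2}}|\;+\;2,\qquad r_{n/2}\le\sqrt[3]{(n/2)^2}.$$
First I would record the smallest values ($|S_1|=1$, and the next few) directly from the definition to anchor the induction, and note that every argument on the right is a genuine $|S_{2^p}|$ with $p<j$, so the recurrence is well founded.

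The target $|S_n|<258n$ is linear, which is the expected order: the homogeneous part $|S_n|=2|S_{n/2}|$ already forces $\Theta(n)$ growth, while the correction $2|S_{r_{n/2}}|+2$ is only of order $n^{2/3}$ and is therefore negligible against $n$. The subtlety is that this correction is \emph{positive}, so the plain hypothesis $|S_n|\le Cn$ does not reproduce itself: the step yields $|S_n|\le Cn+2C(n/2)^{2/3}+2>Cn$. The remedy is the standard device for recurrences of type $T(n)=2T(n/2)+O(n^{2/3})$: strengthen the hypothesis to
$$|S_n|\;\le\;Cn-Dn^{2/3}.$$
Substituting this for $|S_{n/2}|$ and a linear bound $|S_{r_{n/2}}|\le C(n/2)^{2/3}$ for the small term, then collecting the $n^{2/3}$ contributions, the leading coefficient becomes $2^{1/3}C-D(2^{1/3}-1)$, and the step closes as soon as $C<D\,(1-2^{-1/3})$, the leftover $+2$ being dominated by the resulting strictly negative $n^{2/3}$ term once $n$ is past a fixed constant. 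Taking $C=258$ and $D$ slightly above $258/(1-2^{-1/3})\approx1251$ satisfies this comfortably (indeed the true growth rate is far below $258$, so there is ample slack in the constant).

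The main obstacle---and the place where the bookkeeping must be handled with care---is the base of the induction, not the inductive inequality, which is a one-line coefficient comparison. The strengthened bound $Cn-Dn^{2/3}$ is \emph{negative} for small $n$ (up to around $n=2^{7}$ with these constants), so it cannot hold there and the induction cannot be seeded at $n=1$. Worse, the step for $S_n$ appeals not only to $S_{n/2}$ but to $S_{r_{n/2}}$, whose size is roughly $n^{2/3}$ and can therefore drop well below $n/2$; one must ensure these recursive calls land inside the range where the strengthened bound has already been established. Concretely, I would verify the strengthened bound by direct unrolling for all $n=2^j$ up to the first threshold at which both $n/2$ and $\sqrt[3]{(n/2)^2}$ exceed the seed value (a finite, explicitly checkable initial segment), and only then invoke the inductive step for larger $n$; since $Cn-Dn^{2/3}<258n$ this yields the claim for large $n$, while the same finite unrolling gives $|S_n|<258n$ outright on the small range. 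I expect essentially all the real work to sit in this threshold matching between $n/2$ and $\sqrt[3]{(n/2)^2}$.
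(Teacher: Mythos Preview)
Your approach is correct and genuinely different from the paper's. The paper does not use the halving recurrence $s_n = 2s_{n/2} + 2s_{r_{n/2}} + 2$ at all; instead it unrolls the definition completely, writing $s_{2^i}$ as $2^i$ plus a layered sum $\sum_j \lfloor (2^i-1)/2^{3j/2}\rfloor \cdot 2(s_{2^j}-s_{2^{j-1}})$ that counts, for each level $j$, how many of the $n-1$ interleaving blocks have $r$-index at least $2^j$. It then splits this sum at a fixed finite level, computes the low part numerically, bounds $s_{2^j}-s_{2^{j-1}}$ above for the high part using the inductive hypothesis $s_m<258m$, and shows the resulting geometric-type series sums to something strictly below $258$. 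So the paper runs an ordinary (non-strengthened) induction, at the cost of a longer explicit calculation; your route extracts the clean divide-and-conquer structure and pays for it with the standard subtractive strengthening $Cn-Dn^{2/3}$ and a larger finite base check. Both are fine; yours is conceptually tidier, the paper's is more self-contained numerically. One small simplification for your version: since you bound $|S_{r_{n/2}}|$ only by $C\,r_{n/2}$, you do not actually need $r_{n/2}$ to land in the strengthened range---it suffices to verify the weak bound $s_m<258m$ for all small $m$ and the strengthened bound at a single seed $N_0$, after which the step uses the strengthened hypothesis only at $n/2$.
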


\begin{proof}
Let $s_n = |S_n|.$ It is not difficult to see that
$s_1 = 1,~s_2 = 6,~ s_4 = 16,~ s_8 = 46,~s_{16} = 126,~s_{32}=286,\cdots,s_{256}=3426.$
The claim that $s_n\le 258n$ for every
$n\ge 512$ then follows by using simple induction.
It is not difficult to see that
\begin{eqnarray*}
|S_{2^i}| &=& |U_{2^i}|+ (|U_{2^i}|-1)\cdot 2(|S_1|+1)+
\sum_{j=1}^{\lfloor\frac{2i}{3}\rfloor-1} \lfloor\frac{|U_{2^i}|-1}{2^{\frac{3j}{2}}}\rfloor
\cdot 2(|S_{2^j}|-|S_{2^{j-1}}|) \\
s_{2^i} &\le& 2^i+ 4\cdot 2^i + 2^i\cdot
\sum_{j=1}^{\lfloor\frac{2i}{3}\rfloor-1} \frac{2(s_{2^j}-s_{2^{j-1}})}{2^{\frac{3j}{2}}}
 \\
&=&
2^i+ 4\cdot 2^i + 2^i\cdot
(\sum_{j=1}^{4} \frac{2(s_{2^j}-s_{2^{j-1}})}{2^{\frac{3j}{2}}}
+\sum_{j=5}^{\lfloor\frac{2i}{3}\rfloor-1} \frac{2(s_{2^j}-s_{2^{j-1}})}{2^{\frac{3j}{2}}})
\\
&\le& 5\cdot 2^i + 2^i\cdot
\sum_{j=1}^{4} \frac{2(s_{2^j}-s_{2^{j-1}})}{2^{\frac{3j}{2}}}+ 2^i\cdot
\sum_{j=5}^{\lfloor\frac{2i}{3}\rfloor-1}
\frac{2(s_{2^{j-1}}+2s_{2^{\frac{2(j-1)}{3}}}+2)}{2^{\frac{3j}{2}}} \\
&\le& 5\cdot 2^i +
2^i\cdot
\sum_{j=1}^{4} \frac{2(s_{2^j}-s_{2^{j-1}})}{2^{\frac{3j}{2}}}+
2^i\cdot
\sum_{j=5}^{+\infty}
\frac{2(s_{2^{j-1}}+2s_{ 2^{\frac{2(j-1)}{3}}}+2)}{2^{\frac{3j}{2}}}\\
&\le& 5\cdot 2^i +
2^i\cdot
\sum_{j=1}^{4} \frac{2(s_{2^j}-s_{2^{j-1}})}{2^{\frac{3j}{2}}}+
2^i\cdot
\sum_{j=5}^{+\infty}
\frac{2(s_{2^{j-1}}+2s_{2^{\frac{2(j-1)}{3}}})}{2^{\frac{3j}{2}}}+
2^i\cdot
\sum_{j=5}^{+\infty}
\frac{4}{2^{\frac{3j}{2}}}
\\
&<& 6\cdot 2^i +
2^i\cdot
\sum_{j=1}^{4} \frac{2(s_{2^j}-s_{2^{j-1}})}{2^{\frac{3j}{2}}}+
2^i\cdot
\sum_{j=5}^{+\infty}
\frac{2(s_{2^{j-1}}+2s_{ 2^{\frac{2(j-1)}{3}}})}{2^{\frac{3j}{2}}}
\\
&\le& 6\cdot 2^i +
2^i\cdot (5+\frac{35\sqrt{2}}{8})+
2^i\cdot
\sum_{j=5}^{+\infty}
\frac{2(s_{2^{j-1}}+2s_{ 2^{\frac{2(j-1)}{3}}})}{2^{\frac{3j}{2}}}
\\
&\le& 6\cdot 2^i +
12\cdot 2^i+
2^i\cdot
\sum_{j=5}^{+\infty}
\frac{2(s_{2^{j-1}}+2s_{ 2^{\frac{2(j-1)}{3}}})}{2^{\frac{3j}{2}}}
\\
&\le& 18\cdot 2^i +
2^i\cdot
\sum_{j=5}^{+\infty}
\frac{2(258\cdot{2^{j-1}}+2\cdot 258\cdot 2^{\frac{2(j-1)}{3}})}{2^{\frac{3j}{2}}}
\\
&\le& 18\cdot 2^i +
2^i\cdot 258\cdot(\frac{2^{-\frac{5}{2}}}{1-\frac{1}{\sqrt{2}}}+\frac{2^{-{\frac{17}{6}}}}
{1-\frac{1}{2^{\frac{5}{6}}}}) \\
&<& 18\cdot 2^i + 2^i\cdot 258\cdot(0.93) \\
&=& (257.94)\cdot 2^i\\
&<& 258\cdot 2^i.
\end{eqnarray*}

\end{proof}

The sequence $S_n$ possesses the following interesting
combinatorial property:

\begin{lemma}\label{3:2}

Let $k$ and $n\ge 2 k^{\frac{3}{2}}$ be powers of $2.$ Then,
every subsequence $T$ of $S_n$ or $S_n^{-1}$
of length $s_{2k^{\frac{3}{2}}}+1 =
|S_{2k^{\frac{3}{2}}}|+1\le 516k^{\frac{3}{2}}$ contains, as a contiguous subsequence,
$S_k$ or  $0 S^{-1}_k.$

\end{lemma}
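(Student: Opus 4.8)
The plan is to prove the stronger statement by strong induction on $n$, ranging over powers of $2$ with $n\ge 2k^{\frac32}$, and to carry along an auxiliary invariant that locates landmarks near the two ends of every sub-copy. Call an occurrence of $S_k$ or of $0S_k^{-1}$ a \emph{landmark}; the quantity to control is the length of the longest landmark-free window, and the claim is precisely that it never exceeds $s_{2k^{\frac32}}$. The engine is the self-similar decomposition recalled just before the statement: since $n\ge 2k^{\frac32}$ gives $n/2\ge k^{\frac32}$, we have $h:=\langle n/2\rangle\ge k$, the first half of $S_n$ equals $S_{n/2}\,S_h\,0$, and the second half begins with $S_h^{-1}$, so the centre of $S_n$ reads $\cdots S_h\,0\,S_h^{-1}\cdots$. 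Because $S_k$ is a prefix of every $S_m$ with $m\ge k$, the copy $S_h$ already contributes the landmark $S_k$; and when $h=k$ the centre is literally $S_k\,0\,S_k^{-1}$, a block that contains both landmarks and is fixed by the map $w\mapsto w^{-1}$. This self-duality of the clean junction is what lets a single length bound serve $S_n$ and $S_n^{-1}$ at once, since $S_k$ and $0S_k^{-1}$ are exactly its two halves.

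First I would settle the base range $2k^{\frac32}\le n<4k^{\frac32}$. Here $n/2\in[k^{\frac32},2k^{\frac32})$ forces $h=\langle n/2\rangle=k$ (the largest power of two below $(n/2)^{\frac23}\in[k,2k)$), so the central junction is the clean block $S_k\,0\,S_k^{-1}$; moreover the outer pieces $S_{n/2}$ and $S_{n/2}^{-1}$ have length $s_{n/2}<s_{2k^{\frac32}}$, i.e.\ shorter than a full window, and each begins (resp.\ ends) with a landmark by the prefix property. A direct check then shows that any window of length $s_{2k^{\frac32}}+1$ either straddles the clean centre or runs off one of the short outer pieces into $S_h=S_k$, and in both cases captures a landmark.

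For the inductive step I would write $S_n=S_{n/2}\,S_h\,0\,S_h^{-1}\cdots$ and classify a window $T$ of length $s_{2k^{\frac32}}+1$ by which of the internal boundaries $S_{n/2}\mid S_h$, $S_h\mid 0$, $0\mid S_h^{-1}$, $S_h^{-1}\mid(\text{tail})$ it meets. If $T$ lies inside one of $S_{n/2},S_h,S_h^{-1}$ whose size is $\ge 2k^{\frac32}$, the induction hypothesis applies verbatim; a sub-copy of size $<2k^{\frac32}$ is shorter than $|T|$ and is disposed of through its endpoint landmarks. The genuinely delicate configuration is a window centred on the junction $\cdots S_h\,0\,S_h^{-1}\cdots$, whose landmark-free part is bounded by (the distance from the centre back to the last landmark of $S_h$) $+\,1\,+$ (the distance forward to the first landmark of $S_h^{-1}$). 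The auxiliary invariant I would maintain is exactly that each of these distances is at most $s_{k^{\frac32}}$; equivalently, the last landmark of any $S_m$ lies within $s_{k^{\frac32}}$ of its end and the first landmark of any $S_m^{-1}$ within $s_{k^{\frac32}}$ of its start. This is where the symmetry $r_i=r_{n-i}$ is used: it places the last ``large'' block of $S_m$ as the mirror image of the first one (which sits near the centre of the prefix $S_{2k^{\frac32}}$), and the same induction on $S_h,S_h^{-1}$ propagates the bound.

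The crux, and the step I expect to resist a clean write-up, is checking that this central-junction gap really stays under the window length. Feeding the invariant into the junction estimate gives a landmark-free run of at most $2s_{k^{\frac32}}+1$, and the recursion $s_{2k^{\frac32}}=2s_{k^{\frac32}}+2s_k+2$ — which is precisely where the exponent $\tfrac32$ enters — yields $s_{2k^{\frac32}}+1-(2s_{k^{\frac32}}+1)=2s_k+2>0$, so a window of length $s_{2k^{\frac32}}+1$ must overrun the gap and meet a landmark. Invoking the preceding length bound $s_n<258n$ then converts $s_{2k^{\frac32}}$ into the stated form $\le 516k^{\frac32}$. The remaining bookkeeping is the parallel treatment of $S_n^{-1}$, run through its explicit decomposition given above; because the clean junction $S_k\,0\,S_k^{-1}$ is fixed by $w\mapsto w^{-1}$ and the two permitted landmarks are its two halves, the identical argument transfers to $S_n^{-1}$ without change.
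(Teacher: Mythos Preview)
Your approach via the binary half-decomposition $S_n=S_{n/2}\,S_h\,0\,S_h^{-1}\,(\text{tail})$ is genuinely different from the paper's, and as written it has a real gap: you never explain what happens when the window $T$ lies entirely inside the \emph{tail}, i.e.\ the portion of the second half of $S_n$ after $S_h^{-1}0$. That tail is $u_{n/2+1}\,S_{r_{n/2+1}}\,0\,S_{r_{n/2+1}}^{-1}\,0\,u_{n/2+2}\cdots u_n$; it is not an $S_m$ or $S_m^{-1}$ for any $m$, so the induction hypothesis does not apply to it, and your case list (``which of the boundaries $S_{n/2}\mid S_h$, $S_h\mid 0$, $0\mid S_h^{-1}$, $S_h^{-1}\mid(\text{tail})$ does $T$ meet?'') silently omits ``$T$ meets none of them and sits in the tail''. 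One can try to rescue this via the symmetry $r_i=r_{n-i}$, arguing that the tail has the same block pattern as $S_{n/2}$ (with different outer symbols $u_l$) and hence the same landmark positions; but you would have to say this, and also verify that the landmarks you rely on are insensitive to the outer $u_l$'s, which is not automatic. Your auxiliary invariant (``the last landmark of $S_m$ is within $s_{k^{3/2}}$ of its end'') likewise carries the load in several places but is never actually proved; it is not obvious from the symmetry remark alone.

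The paper's argument bypasses all of this by working at the \emph{top-level} decomposition $S_n=u_1\,S_{r_1}\,0\,S_{r_1}^{-1}\,0\,u_2\cdots u_n$ rather than the half-split. Its three cases are: $T$ sits inside some $S_m$ or $S_m^{-1}$ (induction); $T$ sits inside some $S_m\,0\,S_m^{-1}$ (then one side has length $\ge\tfrac12 s_{2k^{3/2}}$, forcing $m\ge 2k^{3/2}$, and the prefix structure of $S_m^{-1}$ supplies an $S_k$); or $T$ contains some outer symbol $u_l$. The decisive observation is in this third case: letting $l'=\lceil l/k^{3/2}\rceil\,k^{3/2}$ be the next multiple of $k^{3/2}$, one has $r_{l'}\ge k$, so $S_k$ occurs as a prefix of $S_{r_{l'}}$, and the stretch $u_l\,S_{r_l}\,0\,S_{r_l}^{-1}\,0\cdots u_{l'}\,S_k$ has length at most $\tfrac12 s_{2k^{3/2}}$ because the $r_i$ for $l\le i<l'$ coincide with those in the first half of $S_{2k^{3/2}}$. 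This ``next multiple of $k^{3/2}$'' device handles every position in $S_n$ uniformly---including your tail---without any auxiliary invariant or endpoint bookkeeping. It is both simpler and complete; your half-recursion, even if patched for the tail, would still need the invariant proved separately.
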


\begin{proof}

We prove the claim by induction on $n.$ If
$n = 2k^{\frac{3}{2}}$ then the claim is vacuously satisfied as $S_n$ contains
a full $S_k$.

Assume, therefore, that the claim holds for every
$m = 2^{j'}$
that satisfies $2k^{\frac{3}{2}}\le m < n = 2^j$. We show
that it also holds for $n.$ Let $T$ be a subsequence of $S_n$
of length $s_{2k^{\frac{3}{2}}} +1.$ Essentially the same argument works
if $T$ is a subsequence of such length of $S^{-1}_n$.
We use the {\bf exactly same} arguments as in Lemma $6.2$ \cite{TZ07}.
For completeness of our presentaion, we reproduce the analysis from \cite{TZ07}.

We consider the following cases:

\noindent
{\bf Case 1:} T is completely contained in a subsequence $S_m$
or $S^{-1}_m$ of $S_n,$ for some $m < n.$

The claim then follows immediately from the induction
hypothesis.

\noindent
{\bf Case 2:} $T$ is completely contained in a subsequence
$S_m0S^{-1}_m$ of $S_n,$ for some $m < n.$

In this case, $T = T'0T''$, where $T'$ is a suffix of $S_m$
and $T''$ is a prefix of $S^{-1}_m.$ Either
$|T'|\ge {\frac{1}{2}}s_{2k^{\frac{3}{2}}}$ or
$|T''|\ge {\frac{1}{2}}s_{2k^{\frac{3}{2}}}$.
Assume that
$|T''|\ge {\frac{1}{2}}s_{2k^{\frac{3}{2}}}$.
Another case is analogous. As $T''$ is a prefix of $S^{-1}_m$,
and  $|T''|\ge {\frac{1}{2}}s_{2k^{\frac{3}{2}}}$, it follows that
$m\ge 2k^{\frac{3}{2}}$. Now, $S^{-1}_k$
is almost a prefix of $S^{-1}_m$, in the sense that they differ
only in symbols that originate directly from $S_m.$
In particular, a prefix of $S^{-1}_m$ of length
${\frac{1}{2}}s_{2k^{\frac{3}{2}}}$, half the
length of $S_{2k^{\frac{3}{2}}}$, ends with a full copy of $S_k,$
followed by $0.$

\noindent
{\bf Case 3:} $T$ contains a symbol $u_l$ of $S_n$ that originates
from $U_n.$

In this case, $T = T'u_lT''.$ Again, we have either
$|T'|\ge {\frac{1}{2}}s_{2k^{\frac{3}{2}}}$ or
$|T''|\ge {\frac{1}{2}}s_{2k^{\frac{3}{2}}}$. Assume again
that $|T''|\ge {\frac{1}{2}}s_{2k^{\frac{3}{2}}}$.
Another case is analogous. Let
$$S_{n,l}= u_l S_{r_l}0S^{-1}_{r_l}0 u_{l+1}\cdots u_{n-1} S_{r_{n-1}}
0S^{-1}_{r_{n-1}}0 u_n$$
\noindent
be the suffix of $S_n$ that starts with the symbol $u_l$ that
originates from the $l$-th symbol of $U_n.$ We claim that
the prefix of $S_{n,l}$ of length ${\frac{1}{2}}s_{2k^{\frac{3}{2}}}$
contains a copy of $S_k.$
Let $l'=\lceil\frac{l}{k^{\frac{3}{2}}}\rceil k^{\frac{3}{2}}$ be
the first index after $l$ which is
divisible by $k^{\frac{3}{2}}.$ Clearly $r_{l'}\ge k$ and hence $S_k$ is a
prefix of $S_{r_{l'}}$. Thus,
$S' = u_l S_{r_l}0S^{-1}_{r_l}0\cdots u_{l'}S_k$ is a
prefix of $S_{m,l}$ which ends with a complete $S_k.$ As
for every $l\le i <l'$ we have $r_i = r_{i~mod~k^{\frac{3}{2}}},$ we have
that $S'$ is contained in the first half of $S_{2k^{\frac{3}{2}}},$ and hence
$|S'|\le {\frac{1}{2}}s_{2k^{\frac{3}{2}}}$ as expected.

\end{proof}

We are now ready to prove the following Theorem.

\begin{theorem}\label{wt}

If for any $n\ge 1$ of the power of $2$ there exists an UES of
length $O(n^c)$ for a $d$-regular graph of size at most $n$,
then there is an infinite SUES for this $d$-regular graph with
cover time $p(n)=O(n^{{\frac{3}{2}}c})$, where $c$ is the fixed constant from
the construction of an universal exploration sequence in \cite{R05,TZ07}.
Furthermore, the SUESs can be constructed deterministically in polynomial time.

\end{theorem}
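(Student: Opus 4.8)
The plan is to take as the infinite SUES the limit $S=\lim_{j\to\infty}S_{2^{j}}$, which is well defined since $S_{2^{p}}$ is a prefix of $S_{2^{j}}$ for every $p<j$, and to prove that $S$ has cover time $p(n)=O(n^{\frac{3}{2}c})$. Two facts drive the argument: (i) for every power of two $k$, both $S_k$ and $0S_k^{-1}$ are UESs for $d$-regular graphs of size at most $bk^{1/c}$; and (ii) by Lemma \ref{3:2}, every sufficiently long contiguous window of $S$ contains one of $S_k,\,0S_k^{-1}$ verbatim. Combining these, a window is a UES for the relevant graph size simply because it contains one.

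For (i) I would use the backtracking property recalled earlier. Each interleaved block $S_{r_i}0S_{r_i}^{-1}0$ is precisely $S_{r_i}$ followed by its backtrack $0S_{r_i}^{-1}0$; by the edge identity $e_k,\tilde e_k,\dots,\tilde e_0,e_0$, executing such a block returns the walk to the same oriented edge it started from, namely the edge reached right after the spine symbol $u_i$. Hence all excursions cancel and the walk produced by $S_n$ visits exactly the vertices visited by the walk of the spine $u_1u_2\cdots u_n=U_n$ (together with the extra vertices seen inside the excursions). Since $U_n$ is a UES for graphs of size at most $bn^{1/c}$, so is $S_n$; and by the reversal remark that $0\tau^{-1}$ is universal whenever $\tau$ is, $0S_n^{-1}$ is a UES for the same size class.

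To assemble the bound, fix a target graph size $n$ and choose the smallest power of two $k$ with $bk^{1/c}\ge n$, so that $k=O(n^{c})$ and, by (i), both $S_k$ and $0S_k^{-1}$ are UESs for size-$n$ graphs. Any contiguous window $T$ of $S$ of length $s_{2k^{3/2}}+1$ lies inside some prefix $S_{N}$ with $N\ge 2k^{3/2}$ a power of two, so Lemma \ref{3:2} guarantees that $T$ contains $S_k$ or $0S_k^{-1}$ as a contiguous subsequence. Executing $T$ therefore runs, from whatever edge the robot currently occupies, a complete UES for size-$n$ graphs, so $T$ visits every vertex and is itself a UES for size $n$. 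By the length estimate in Lemma \ref{3:2} this window length satisfies $s_{2k^{3/2}}+1\le 516k^{3/2}=O\big((n^{c})^{3/2}\big)=O(n^{\frac{3}{2}c})$, which is the asserted cover time $p(n)$.

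Polynomial-time constructibility is then immediate: each $U_{n}$ is computable in polynomial time by Theorem \ref{R05}, the recursion defining $S_n$ unwinds into polynomially many such calls, and the bound $|S_n|<258n$ keeps the produced sequence of linear length, so any desired prefix of $S$ is generated in time polynomial in its length. The step demanding the most care is the excursion-cancellation claim in (i): one must verify that $0S_{r_i}^{-1}0$ restores not just the vertex but the oriented edge held after $u_i$, since it is this orientation that fixes the edge selected by the next spine symbol $u_{i+1}$. Once that is checked the identification of the $U_n$-spine---and hence universality of $S_n$---follows at once.
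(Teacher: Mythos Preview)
Your proposal is correct and follows essentially the same approach as the paper: cancel the recursive excursions $S_{r_i}0S_{r_i}^{-1}0$ to see that $S_k$ inherits universality from its spine $U_k$, then invoke Lemma~\ref{3:2} to locate a copy of $S_k$ or $0S_k^{-1}$ inside any window of length $s_{2k^{3/2}}+1$. Your write-up is in fact more careful than the paper's own proof---you explicitly define the infinite SUES as a limit of prefixes, you treat the $0S_k^{-1}$ case that Lemma~\ref{2} actually produces (the paper's proof only mentions $S_n$), and you isolate the oriented-edge restoration as the key technical check.
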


\begin{proof}

Let us look at the recursive definition of $S_n$ and
ignore all the recursive components of $S_j$ such as $j < n$, and
their inverses, which because that
$0S^{-1}_j0$
reverses the actions of $S_j.$ The left parts are
$U_n=u_1,u_2,u_3,\cdots, u_n.$ However, note that $U_n$ is a UES for
for $d$-regular graphs of size at
least $bn^{\frac{1}{c}},$ for some constants $b\ge 1$ and $c>1$
due to Theorem~\ref{R05}. Theorem~\ref{R05} also show that such a
UES can be constructed, deterministically, in polynomial time.
According to Lemma~\ref{3:2}, we know that
every subsequence $T$ of the SUES we constructed of length
$s_{2n^{\frac{3}{2}}}+1 = O(n^{\frac{3}{2}})$ contains,
as a contiguous subsequence,
a full copy of $S_n$. Consequently,
there is an infinite SUES for $d$-regular graphs with
cover time $p(n)=O(n^{{\frac{3}{2}}c})$, where $c$ is the fixed constant from
the construction of an universal exploration sequence in \cite{R05,TZ07}.
Furthermore, the SUESs can be constructed deterministically in polynomial time.

\end{proof}

This thus gives us an explicit
solution to the treasure hunt problem. In fact, the seeking robot
just need run the SUES. The adversary will decide when
the treasure is put into the graph. But note that the subsequence of a SUES with
length $p(n)$ starting at the activation point forms a UES, and then the seeking robot finds the treasure by following the instruction of the sequence.

\subsection{Explicit SUESs with $p(n)=O(n^{c(1+\frac{1}{\lambda})})$}
\label{mainresult}

In this section, we propose our main result, a new explicit strongly
universal exploration sequence with cover time
$p(n)=O(n^{c(1+\frac{1}{\lambda})})$, which significantly improves the currently best
known result in \cite{TZ07} with
$p(n)=O(n^{2c})$, where $c$ is a fixed constant induced from
the construction of an universal exploration sequence in \cite{R05,TZ07},
and $\lambda \gg 1$ is an arbitrary large, but fixed,  integer constant.

\subsubsection{Treasure sequences}
A treasure sequence is an infinite sequence $Q(\lambda)= q_1, q_2, q_3, q_4,\cdots,$
based on a constant integer
$\lambda \gg 1$ such as
$$q_i= \sum_{j=i}^{+\infty} (2^{\frac{-j}{\lambda}} + 2^{-(\frac{2\lambda+1}{\lambda^2+\lambda}\cdot j+ \frac{\lambda}{\lambda+1}-2)} +
2^{-(\frac{\lambda+1}{\lambda}\cdot j -2)}),$$
where $i\ge 1.$ It is easy to see that the treasure sequences
$Q(\lambda)$ are monotonely decreasing, $\lim_{i \to +\infty} q_i=0.$
We call the first element or term $q_i<1$ in $Q(\lambda)$  a {\sl golden ball}, where $i\ge 1.$
Similarly,
we call the fixed index $i$ of the {\sl golden ball} of $Q(\lambda)$ as the {\sl golden point}, where $\lambda \gg 1$ is an arbitrary large, but fixed,  integer constant.

The following Lemma follows directly.

\begin{lemma}
There exists a golden ball in the treasure sequence.
\end{lemma}

The property of the treasure sequence will be used to further reduce the length of
the cover time $p(n)$ of the SUESs later.

\subsubsection{Construction of SUESs}

Same as in Section \ref{wm},
let $U_n$ be a sequence of length $n$ which is a UES for $d$-regular
graphs of size at most $bn^{\frac{1}{c}},$ for some constants
$b\ge 1,$ and $c > 1.$  And for every $k = 2^i$ and $n = 2^j$, where
$i < j,$ $U_k$ is a prefix of $U_n$.

We now define recursively a sequence  $S_n$ of strongly
universal exploration sequences. We start with
$S_1 = U_1.$ Assume that $U_n = u_1u_2\cdots u_n$ and that $n\ge 2.$
Define,
$$ S_n =u_1 S_{r_1} 0 S_{r_1}^{-1} 0 u_2 S_{r_2} 0 S_{r_2}^{-1} 0 u_3\cdots
u_i S_{r_i} 0 S_{r_i}^{-1} 0 u_{i+1}\cdots u_{n-1} S_{r_{n-1}} 0 S_{r_{n-1}}^{-1}0 u_n,$$

\noindent
where for every $1\le i < n,$ we set $r_i =\langle\langle i  \rangle\rangle$,
where $\langle\langle i\rangle\rangle = max\{2^j |2^{{\frac{\lambda+1}{\lambda}}j} \le i, j\in Z^{+}\}$. Note that as $n=2^j$, for some $j\ge 1,$ for every $k=2^p,$ where $p<j,$ the sequence $S_k$ is constructed as a prefix of $S_n$.
Moreover, we also assign $r_i=r_{n-i}$ for every $1\le i <n$. 

Note that $r_i\le \sqrt[{\lambda+1}]{i^{\lambda}}$.  Thus, if
$r_{\frac{n}{2}}=\sqrt[{\lambda+1}]{({\frac{n}{2}})^{\lambda}}$, the first half of $S_n$ is equal to
$S_{\frac{n}{2}}S_{\sqrt[{\lambda+1}]{({\frac{n}{2}})^{\lambda}}}0,$ and ends with a full copy
of $S_{\sqrt[{\lambda+1}]{({\frac{n}{2}})^{\lambda}}}$, followed by a $0$. Similarly, the second
half of $S_n$ starts with a full copy of $S_{\sqrt[{\lambda+1}]{({\frac{n}{2}})^{\lambda}}}^{-1}.$

We next bound the length of $S_n$.

Let $q^{[\lambda]}_g,g$
denote the {\sl golden ball} and {\sl golden point} of the treasure sequence $Q(\lambda)$, respectively.
Further more, if $g\ge 2$, we set  $C_{finite}=\sum_{j=1}^{g-1} \frac{2(|S_{2^j}|-|S_{2^{j-1}}|)}
{2^{\frac{(\lambda+1)j}{\lambda}}},$ and
$C_{max}=max\{y|y=\frac{S_{2^j}}{2^j},~for~1\le j\le g-1\}$, otherwise $C_{finite}=C_{max}=0$ (e.g. $g=1$).
Consequently, $q^{[\lambda]}_g,
g,C_{finite},$ and $C_{max}$ are constants due to the definition of the treasure sequence and the fact that only a
finite  number of terms are involved in the calculations, where $0<q^{[\lambda]}_g<1$, and $g\ge 1$.

We are now ready to prove the following Lemma.

\begin{lemma}
For every $n = 2^j,$ where $j\ge 1,$
$|S_n|<(\frac{5+C_{finite}}{1-q^{[\lambda]}_g}+C_{max})n.$
\end{lemma}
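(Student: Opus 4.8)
The plan is to prove $s_n := |S_n| < Kn$ with $K = \frac{5+C_{finite}}{1-q^{[\lambda]}_g}+C_{max}$ by strong induction on $t$, where $n=2^t$. First I would read the length off the construction exactly as in the weak case: the $n$ symbols from $U_n$ contribute $n$, and each separator $S_{r_i}0S_{r_i}^{-1}0$ contributes $2s_{r_i}+2$, so $s_n = n+\sum_{i=1}^{n-1}(2s_{r_i}+2)$. Peeling off the common base level $s_1=1$ and the additive $+2$ from each block gives $s_n = n+4(n-1)+2\sum_{i=1}^{n-1}(s_{r_i}-s_1) < 5n + 2\sum_{i=1}^{n-1}(s_{r_i}-s_1)$, and telescoping $s_{r_i}-s_1 = \sum_{j:\,2^j\le r_i}(s_{2^j}-s_{2^{j-1}})$ together with the count of blocks having $r_i\ge 2^j$ yields $s_n < 5n + n\sum_{j\ge 1}\frac{2(s_{2^j}-s_{2^{j-1}})}{2^{(\lambda+1)j/\lambda}}$, the exact analogue of the $\frac32 c$ display with $\frac32$ replaced by $\frac{\lambda+1}{\lambda}$. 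The one structural input here is that the number of indices $i$ with $r_i=\langle\langle i\rangle\rangle\ge 2^j$ is at most $n/2^{(\lambda+1)j/\lambda}$, which I would justify from the self-similar, symmetric ($r_i=r_{n-i}$) layout: each doubling of $t$ introduces only $O(n/2^{(\lambda+1)j/\lambda})$ fresh level-$j$ blocks.

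For the base cases $1\le t\le g-1$ (an empty range if $g=1$) I would invoke the definition of $C_{max}$ directly: $s_{2^t}\le C_{max}\,2^t\le K\,2^t$, since $K\ge C_{max}$. For the inductive step $t\ge g$, assume $s_m\le Km$ for every power of two $m<n$ and split the sum at the golden point $g$. The head $\sum_{j=1}^{g-1}\frac{2(s_{2^j}-s_{2^{j-1}})}{2^{(\lambda+1)j/\lambda}}$ is, by definition, exactly $C_{finite}$. For each tail term $j\ge g$ I would expand one increment through the recursion $s_{2^j}=2s_{2^{j-1}}+2s_{\langle\langle 2^{j-1}\rangle\rangle}+2$, so that $s_{2^j}-s_{2^{j-1}}=s_{2^{j-1}}+2s_{2^{\lfloor(j-1)\lambda/(\lambda+1)\rfloor}}+2$, and then apply the induction hypothesis to both surviving terms (legitimate since $j<\lambda t/(\lambda+1)<t$ and $\lfloor(j-1)\lambda/(\lambda+1)\rfloor<t$) together with $2\le 2K$, giving $s_{2^j}-s_{2^{j-1}}\le K\,2^{j-1}+2K\,2^{(j-1)\lambda/(\lambda+1)}+2K$.

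The crux is that after dividing by $2^{(\lambda+1)j/\lambda}$ and multiplying by $2$, each tail term splits into exactly three geometric pieces whose exponents I would verify line up term for term with the three summands defining $q_i$: the $s_{2^{j-1}}$ part collapses to $K\,2^{-j/\lambda}$, the $2s_{2^{\lfloor(j-1)\lambda/(\lambda+1)\rfloor}}$ part to $K\,2^{-(\frac{2\lambda+1}{\lambda^2+\lambda}j+\frac{\lambda}{\lambda+1}-2)}$, and the additive $2K$ to $K\,2^{-(\frac{\lambda+1}{\lambda}j-2)}$. Summing over $j\ge g$ and enlarging the finite tail to the full infinite sum gives $\le K\,q^{[\lambda]}_g$. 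Collecting everything yields $s_n < n\bigl(5+C_{finite}+K\,q^{[\lambda]}_g\bigr)$, and the stated value of $K$ is precisely the one solving $5+C_{finite}+K\,q^{[\lambda]}_g\le K$, i.e. $K(1-q^{[\lambda]}_g)\ge 5+C_{finite}$; here the golden-ball property $0<q^{[\lambda]}_g<1$ (guaranteed by the preceding lemma, since the treasure sequence decreases to $0$) is exactly what makes $1-q^{[\lambda]}_g>0$ so the induction closes, and the strict inequality descends from $4(n-1)<4n$.

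The step I expect to be the main obstacle is the bookkeeping that pins the three geometric rates to the three terms of the treasure sequence together with its convergence: one must confirm the exponents really collapse to $-j/\lambda$, $-(\frac{2\lambda+1}{\lambda^2+\lambda}j+\frac{\lambda}{\lambda+1}-2)$ and $-(\frac{\lambda+1}{\lambda}j-2)$ after using $r_i\le i^{\lambda/(\lambda+1)}$, and — more delicately — justify the block-count bound $n/2^{(\lambda+1)j/\lambda}$ from the recursive symmetric construction rather than from the naive threshold $i\ge 2^{(\lambda+1)j/\lambda}$, which over-counts. Everything else is routine: the base cases are immediate from $C_{max}$, the head is a definition, and the final closing inequality is a one-line rearrangement.
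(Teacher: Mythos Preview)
Your proposal is correct and follows essentially the same route as the paper: strong induction on the exponent, the same telescoped length identity with the count $\lfloor (n-1)/2^{(\lambda+1)j/\lambda}\rfloor$ of level-$\ge j$ blocks, the split at the golden point $g$ into the finite head $C_{finite}$ and an infinite tail, the expansion $s_{2^j}-s_{2^{j-1}}=s_{2^{j-1}}+2s_{2^{\lfloor(j-1)\lambda/(\lambda+1)\rfloor}}+2$ (which is exactly the recursion the paper uses), and the identification of the resulting three geometric series with $q^{[\lambda]}_g$. The only cosmetic difference is in how the induction is closed: the paper inserts the nonnegative slack term $C_{max}(1-q^{[\lambda]}_g)$ and then simplifies to $K\cdot 2^i$, whereas you verify the equivalent inequality $K(1-q^{[\lambda]}_g)\ge 5+C_{finite}$ directly.
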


\begin{proof}
Let $s_n = |S_n|.$ For every $n\le 2^{g-1}$, we know it is true due to the definition of the constant
$C_{max}.$
The claim that $s_n <(\frac{5+C_{finite}}{1-q^{[\lambda]}_g}+C_{max})n$ for every
$n\ge 2^g$ then follows by using the induction.
It is not difficult to see that
\begin{eqnarray*}
|S_{2^i}| &=& |U_{2^i}|+ (|U_{2^i}|-1)\cdot 2(|S_1|+1)+
\sum_{j=1}^{\lfloor\frac{\lambda}{\lambda+1}\rfloor\cdot i-1} \lfloor\frac{|U_{2^i}|-1}{2^{\frac{(\lambda+1)j}{\lambda}}}\rfloor
\cdot 2(|S_{2^j}|-|S_{2^{j-1}}|) \\
s_{2^i} &\le& 2^i+ 4\cdot 2^i + 2^i\cdot
\sum_{j=1}^{\lfloor\frac{\lambda}{\lambda+1}\rfloor\cdot i-1} \frac{2(s_{2^j}-s_{2^{j-1}})}{2^{\frac{(\lambda+1)j}{\lambda}}}
 \\
&=& 2^i+ 4\cdot 2^i +
2^i\cdot
\sum_{j=1}^{g-1} \frac{2(s_{2^j}-s_{2^{j-1}})}{2^{\frac{(\lambda+1)j}{\lambda}}}+
2^i\cdot
\sum_{j=g}^{\lfloor\frac{\lambda}{\lambda+1}\rfloor\cdot i-1} \frac{2(s_{2^j}-s_{2^{j-1}})}{2^{\frac{(\lambda+1)j}{\lambda}}}
 \\
&<& 5\cdot 2^i +
2^i\cdot
\sum_{j=1}^{g-1} \frac{2(s_{2^j}-s_{2^{j-1}})}{2^{\frac{(\lambda+1)j}{\lambda}}}+
2^i\cdot
\sum_{j=g}^{+\infty} \frac{2(s_{2^j}-s_{2^{j-1}})}{2^{\frac{(\lambda+1)j}{\lambda}}}
 \\
&\le& 5\cdot 2^i + C_{finite}\cdot 2^i + 2^i\cdot
\sum_{j=g}^{+\infty}
\frac{2(s_{2^{j-1}}+2s_{
2^{\frac{(j-1)\lambda}{\lambda+1}}}+2)}{2^{\frac{(\lambda+1)j}{\lambda}}}
\\
&<& 5\cdot 2^i + C_{finite}\cdot 2^i + 2^i\cdot
(\frac{5+C_{finite}}{1-q^{[\lambda]}_g}+C_{max})\cdot
\sum_{j=g}^{+\infty}
\frac{2({2^{j-1}}+2\cdot{ 2^{\frac{(j-1)\lambda}{\lambda+1}}}+2)}{2^{\frac{(\lambda+1)j}{\lambda}}}
\\
&=& 5\cdot 2^i + C_{finite}\cdot 2^i + 2^i\cdot
(\frac{5+C_{finite}}{1-q^{[\lambda]}_g}+C_{max})\cdot q^{[\lambda]}_g\\
&<& 5\cdot 2^i + C_{finite}\cdot 2^i + 2^i\cdot C_{max}\cdot (1-q^{[\lambda]}_g)+
2^i\cdot
(\frac{5+C_{finite}}{1-q^{[\lambda]}_g}+C_{max})\cdot q^{[\lambda]}_g\\
&=& 2^i\cdot
(\frac{5+C_{finite}}{1-q^{[\lambda]}_g}+C_{max}).
\end{eqnarray*}
\end{proof}

Using the same arguments as in Lemma~\ref{3:2}, we can prove the following Lemma.

\begin{lemma}\label{p+1:p}

Let $k$ and $n\ge 2 k^{\frac{\lambda+1}{\lambda}}$ be powers of $2.$ Then,
every subsequence $T$ of $S_n$ or $S_n^{-1}$
of length $s_{2k^{\frac{\lambda+1}{\lambda}}}+1 = O(k^{\frac{\lambda+1}{\lambda}})$ contains, as a contiguous subsequence,
a full of $S_k$ or  $0 S^{-1}_k.$

\end{lemma}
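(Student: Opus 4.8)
The plan is to prove Lemma~\ref{p+1:p} by closely following the structure of the proof of Lemma~\ref{3:2}, generalizing the exponent $\frac{3}{2}$ to $\frac{\lambda+1}{\lambda}$ throughout. The two lemmas are structurally identical: in Lemma~\ref{3:2} the rank function is $r_i=\langle i\rangle=\max\{2^j\mid 2^{\frac{3}{2}j}\le i\}$, satisfying $r_i\le\sqrt[3]{i^2}$, while here $r_i=\langle\langle i\rangle\rangle=\max\{2^j\mid 2^{\frac{\lambda+1}{\lambda}j}\le i\}$, satisfying $r_i\le\sqrt[\lambda+1]{i^\lambda}$. The recursive definition of $S_n$ is the same in both sections, so the entire combinatorial skeleton transfers verbatim with the substitution $k^{3/2}\mapsto k^{\frac{\lambda+1}{\lambda}}$.

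First I would set up the induction on $n$ over powers of $2$. The base case is $n=2k^{\frac{\lambda+1}{\lambda}}$, where the claim holds vacuously because $S_n$ contains a full copy of $S_k$ by the structural observation recorded just before the lemma (the first half of $S_n$ ends with a full copy of $S_{\sqrt[\lambda+1]{(n/2)^\lambda}}=S_k$). For the inductive step, assuming the claim for all powers of $2$ in the range $[2k^{\frac{\lambda+1}{\lambda}},n)$, I would take an arbitrary contiguous subsequence $T$ of $S_n$ (the $S_n^{-1}$ case being symmetric) of length $s_{2k^{\frac{\lambda+1}{\lambda}}}+1$ and split into the same three cases as in Lemma~\ref{3:2}: (Case~1) $T$ lies entirely inside some recursive block $S_m$ or $S_m^{-1}$ with $m<n$, handled directly by the induction hypothesis; (Case~2) $T$ straddles a junction $S_m0S_m^{-1}$, writing $T=T'0T''$ and using the pigeonhole bound that one of $|T'|,|T''|\ge\frac{1}{2}s_{2k^{\frac{\lambda+1}{\lambda}}}$ to force $m\ge 2k^{\frac{\lambda+1}{\lambda}}$, so that a half-length prefix of $S_m^{-1}$ ends with a full $S_k$; (Case~3) $T$ contains a symbol $u_l$ originating from $U_n$, writing $T=T'u_lT''$ and again using the half-length bound, then examining the suffix $S_{n,l}$ and the index $l'=\lceil l/k^{\frac{\lambda+1}{\lambda}}\rceil k^{\frac{\lambda+1}{\lambda}}$ to locate a copy of $S_k$ inside a half of $S_{2k^{\frac{\lambda+1}{\lambda}}}$.

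The one place requiring genuine verification rather than mechanical substitution is the arithmetic underlying Cases~2 and~3: I must confirm that with the generalized rank $r_i\le\sqrt[\lambda+1]{i^\lambda}$, a prefix of $S_m^{-1}$ (or of $S_{n,l}$) of length exactly $\frac{1}{2}s_{2k^{\frac{\lambda+1}{\lambda}}}$ still ends with, respectively contains, a complete $S_k$. This hinges on the periodicity property $r_i=r_{i\bmod k^{\frac{\lambda+1}{\lambda}}}$ together with the identity $r_{l'}\ge k$ at the first index $l'$ divisible by $k^{\frac{\lambda+1}{\lambda}}$, which holds because $2^{\frac{\lambda+1}{\lambda}\cdot\log_2 k}=2^{\log_2 k^{\frac{\lambda+1}{\lambda}}}=k^{\frac{\lambda+1}{\lambda}}\le l'$ forces $\langle\langle l'\rangle\rangle\ge k$. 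I expect this periodicity-and-divisibility check to be the main (though still routine) obstacle, since it is the only step where the specific value of the exponent interacts with the modular arithmetic; everything else is a transcription of the ISAAC/Ta-Shma--Zwick argument with the exponent renamed. I would therefore state that the proof is identical to that of Lemma~\ref{3:2} after the substitution $\frac{3}{2}\mapsto\frac{\lambda+1}{\lambda}$, and present only the divisibility verification in detail.
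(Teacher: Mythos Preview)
Your proposal is correct and follows essentially the same approach as the paper: the paper's proof of Lemma~\ref{p+1:p} is a verbatim reproduction of the proof of Lemma~\ref{3:2} with $k^{3/2}$ replaced by $k^{\frac{\lambda+1}{\lambda}}$, using the same induction on $n$ and the same three-case split. Your plan to highlight the divisibility/periodicity check $r_{l'}\ge k$ is reasonable extra care, but the paper simply asserts it (``Clearly $r_{l'}\ge k$'') without further justification.
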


\begin{proof}

We prove the claim by induction on $n.$ If
$n = 2k^{\frac{\lambda+1}{\lambda}}$ then the claim is vacuously satisfied as $S_n$ contains a full $S_k$.

Assume, therefore, that the claim holds for every
$m = 2^{j'}$
that satisfies $2k^{\frac{\lambda+1}{\lambda}}\le m < n = 2^j$. We show
that it also holds for $n.$ Let $T$ be a subsequence of $S_n$
of length $s_{2k^{\frac{\lambda+1}{\lambda}}} +1.$ Essentially the same argument works
if $T$ is a subsequence of such length of $S^{-1}_n$.

Same as in Lemma~\ref{3:2}, we study the following cases:

\noindent
{\bf Case 1:} T is completely contained in a subsequence $S_m$
or $S^{-1}_m$ of $S_n,$ for some $m < n.$

The claim then follows immediately from the induction
hypothesis.

\noindent
{\bf Case 2:} $T$ is completely contained in a subsequence
$S_m0S^{-1}_m$ of $S_n,$ for some $m < n.$

In this case, $T = T'0T''$, where $T'$ is a suffix of $S_m$
and $T''$ is a prefix of $S^{-1}_m.$ Either
$|T'|\ge {\frac{1}{2}}s_{2k^{\frac{\lambda+1}{\lambda}}}$ or
$|T''|\ge {\frac{1}{2}}s_{2k^{\frac{\lambda+1}{\lambda}}}$.
Assume that
$|T''|\ge {\frac{1}{2}}s_{2k^{\frac{\lambda+1}{\lambda}}}$.
Another case is analogous. As $T''$ is a prefix of $S^{-1}_m$,
and  $|T''|\ge {\frac{1}{2}}s_{2k^{\frac{\lambda+1}{\lambda}}}$, it follows that
$m\ge 2k^{\frac{\lambda+1}{\lambda}}$. Now, $S^{-1}_k$
is almost a prefix of $S^{-1}_m$, in the sense that they differ
only in symbols that originate directly from $S_m.$
In particular, a prefix of $S^{-1}_m$ of length
${\frac{1}{2}}s_{2k^{\frac{\lambda+1}{\lambda}}}$, half the
length of $S_{2k^{\frac{\lambda+1}{\lambda}}}$, ends with a full copy of $S_k,$
followed by $0.$

\noindent
{\bf Case 3:} $T$ contains a symbol $u_l$ of $S_n$ that originates
from $U_n.$

In this case, $T = T'u_lT''.$ Again, we have either
$|T'|\ge {\frac{1}{2}}s_{2k^{\frac{\lambda+1}{\lambda}}}$ or
$|T''|\ge {\frac{1}{2}}s_{2k^{\frac{\lambda+1}{\lambda}}}$. Assume again
that $|T''|\ge {\frac{1}{2}}s_{2k^{\frac{\lambda+1}{\lambda}}}$.
Another case is analogous. Let
$$S_{n,l}= u_l S_{r_l}0S^{-1}_{r_l}0 u_{l+1}\cdots u_{n-1} S_{r_{n-1}}
0S^{-1}_{r_{n-1}}0 u_n$$
\noindent
be the suffix of $S_n$ that starts with the symbol $u_l$ that
originates from the $l$-th symbol of $U_n.$ We claim that
the prefix of $S_{n,l}$ of length ${\frac{1}{2}}s_{2k^{\frac{\lambda+1}{\lambda}}}$
contains a copy of $S_k.$
Let $l'=\lceil\frac{l}{k^{\frac{\lambda+1}{\lambda}}}\rceil k^{\frac{\lambda+1}{\lambda}}$ be
the first index after $l$ which is
divisible by $k^{\frac{\lambda+1}{\lambda}}.$ Clearly $r_{l'}\ge k$ and hence $S_k$ is a
prefix of $S_{r_{l'}}$. Thus,
$S' = u_l S_{r_l}0S^{-1}_{r_l}0\cdots u_{l'}S_k$ is a
prefix of $S_{m,l}$ which ends with a complete $S_k.$ As
for every $l\le i <l'$ we have $r_i = r_{i~mod~k^{\frac{\lambda+1}{\lambda}}},$ we have
that $S'$ is contained in the first half of $S_{2k^{\frac{\lambda+1}{\lambda}}},$ and hence
$|S'|\le {\frac{1}{2}}s_{2k^{\frac{\lambda+1}{\lambda}}}$ as expected.

\end{proof}

Furthermore, by the same arguments as in Theorem~\ref{wt}, we have:

\begin{theorem}

If for every $n\ge 1$ of the power of $2$ there is an UES of
length $O(n^c)$ for $d$-regular graphs of size at most $n$,
then there exists an infinite SUES for $d$-regular graphs with
cover time $p(n)=O(n^{c(1+\frac{1}{\lambda})})$,
where $c$ is the constant induced from the construction of an universal exploration
sequence in \cite{R05,TZ07}, and $\lambda \gg 1$ is an arbitrary large, but fixed,  integer constant. 
Moreover, the SUESs can be constructed deterministically in polynomial time.

\end{theorem}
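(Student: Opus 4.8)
The plan is to follow the argument of Theorem~\ref{wt} essentially verbatim, replacing the exponent $\frac{3}{2}$ everywhere by $\frac{\lambda+1}{\lambda}=1+\frac{1}{\lambda}$ and invoking Lemma~\ref{p+1:p} in place of Lemma~\ref{3:2}. The starting observation is structural: in the recursive definition of $S_n$ every inserted block has the shape $S_{r_i}0S_{r_i}^{-1}0$, and by the reversal property of exploration sequences recalled earlier the sub-walk $0S_{r_i}^{-1}0$ exactly undoes the sub-walk $S_{r_i}$. Hence, as far as the set of vertices actually visited is concerned, executing a full copy of $S_n$ realises precisely the walk of $U_n=u_1u_2\cdots u_n$; the recursive interleavings only return the robot to where it started. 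The same holds for $0S_n^{-1}$.

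First I would record that, since $U_n$ is a UES for connected $d$-regular graphs of size at most $bn^{1/c}$ by Theorem~\ref{R05}, a full copy of $S_n$ (and likewise a full copy of $0S_n^{-1}$) visits every vertex of any such graph, for any numbering of its edges and any starting edge. Next I would apply Lemma~\ref{p+1:p}: for powers of two $k$ and $n\ge 2k^{(\lambda+1)/\lambda}$, every contiguous subsequence of $S_n$ or $S_n^{-1}$ of length $s_{2k^{(\lambda+1)/\lambda}}+1=O(k^{(\lambda+1)/\lambda})$ contains a full copy of $S_k$ or of $0S_k^{-1}$. Letting $n$ range over all powers of two yields a single infinite sequence $S$, well defined because $S_k$ is a prefix of $S_n$ whenever $k<n$ are both powers of two; consequently every window of length $O(k^{(\lambda+1)/\lambda})$ in $S$ contains a full $S_k$ or $0S_k^{-1}$, and therefore a UES for $d$-regular graphs of size at most $bk^{1/c}$.

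It then remains to translate the sequence index $k$ into the graph size $N$ and read off the cover time. Given any $N\ge 1$, I would take $k$ to be the least power of two satisfying $bk^{1/c}\ge N$, so that $k=O(N^c)$; then any contiguous window of $S$ of length $O(k^{(\lambda+1)/\lambda})=O(N^{c(\lambda+1)/\lambda})=O(N^{c(1+1/\lambda)})$ contains a UES for $d$-regular graphs of size $N$. This is exactly the assertion that $S$ is an infinite SUES with cover time $p(N)=O(N^{c(1+\frac{1}{\lambda})})$. Deterministic polynomial-time constructibility is inherited from the building blocks: each $U_n$ is constructible in time polynomial in $n$ by Theorem~\ref{R05}, the recursion defining $S_n$ uses only polynomially many concatenations and symbol inversions, and the preceding length bound gives $|S_n|=O(n)$, so every prefix of $S$ is produced in time polynomial in its length.

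The only point needing care is the bookkeeping that converts the combinatorial guarantee of Lemma~\ref{p+1:p}, stated in the index $k$, into a cover time stated in the graph size $N$: one must check that rounding $k$ up to a power of two and passing through the relation $bk^{1/c}\ge N$ costs only constant factors, leaving the exponent $c(1+\frac{1}{\lambda})$ intact. This is routine, and since Lemma~\ref{p+1:p} was already obtained by the same case analysis as Lemma~\ref{3:2}, no further combinatorial work is required and the theorem follows.
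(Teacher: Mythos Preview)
Your proposal is correct and follows essentially the same approach as the paper: both argue that the inserted blocks $S_{r_i}0S_{r_i}^{-1}0$ cancel so that $S_n$ realises the UES $U_n$, then invoke Lemma~\ref{p+1:p} to conclude that any window of length $O(k^{(\lambda+1)/\lambda})$ contains a full $S_k$ (or $0S_k^{-1}$), and finally translate this into a cover-time bound of $O(N^{c(1+1/\lambda)})$. Your write-up is in fact more explicit than the paper's about defining the limiting infinite sequence and about the index-to-graph-size conversion, but the underlying argument is identical.
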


\begin{proof}

Let us ignore all the recursive components of $S_j$ from $S_n$ such as $j < n$,
and their inverses, which because that
$0S^{-1}_j0$ reverses the actions of $S_j.$ The left parts are
$U_n=u_1,u_2,u_3,\cdots, u_n.$ Moreover, note that $U_n$ is a UES for
for $d$-regular graphs of size at
least $bn^{\frac{1}{c}},$ for some constants $b\ge 1$ and $c>1$
due to Theorem~\ref{R05}. Theorem~\ref{R05} also show that such a
UES could be constructed, deterministically, in polynomial time.
According to Lemma~\ref{p+1:p}, we know that
every subsequence $T$ of the SUES we constructed of length
$s_{2n^{\frac{\lambda+1}{\lambda}}}+1 = O(n^{\frac{\lambda+1}{\lambda}})$ contains,
as a contiguous subsequence,
a full copy of $S_n$. Consequently,
there is an infinite SUES for $d$-regular graphs with
cover time $p(n)=O(n^{{\frac{\lambda+1}{\lambda}}c})$, where $c$
is the fixed constant from the construction of an universal exploration
sequence in \cite{R05,TZ07}.
Furthermore, the SUESs can be constructed deterministically in polynomial time.

\end{proof}

Finally, by employing the standard double techniques in $d$-regular graphs of size at most $n$, we get the desired result.

\begin{theorem}

If for every $n\ge 1$ there is an UES of
length $O(n^c)$ for $d$-regular graphs of size at most $n$,
then there exists an infinite SUES for $d$-regular graphs with
cover time $p(n)=O(n^{c(1+\frac{1}{\lambda})})$,
where $c$ is the fixed constant induced from the construction of an universal exploration
sequence in \cite{R05,TZ07}, and $\lambda \gg 1$ is an arbitrary large, but fixed,  integer constant.
Moreover, the SUESs can be constructed deterministically in polynomial time.

\end{theorem}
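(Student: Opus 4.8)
The plan is to bootstrap the preceding theorem, which already produces an infinite SUES whose cover-time guarantee is verified for every graph size that is a power of $2$, up to \emph{all} sizes $n\ge 1$ by a rounding (doubling) argument. The only new ingredient is the elementary observation that the UES property is monotone in the graph size: a sequence that explores every connected $d$-regular graph on at most $\hat{n}$ vertices certainly explores every such graph on at most $n$ vertices whenever $n\le\hat{n}$, since every graph of the latter kind is also of the former kind.

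First I would invoke the previous theorem to obtain a single infinite sequence $S$ together with a function $p(\cdot)$ satisfying $p(k)=O(k^{c(1+1/\lambda)})$, such that for every power of two $k$, every contiguous block of $S$ of length $p(k)$ is a UES for $d$-regular graphs of size $k$. This is exactly what Lemma~\ref{p+1:p} supplies once combined with the preceding length bound on $S_n$, and that construction is already deterministic and runs in polynomial time, so both of those properties will be inherited without further work.

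Next, for an arbitrary size $n\ge 1$ I would set $\hat{n}=2^{\lceil\log_2 n\rceil}$, the least power of two with $\hat{n}\ge n$, so that $n\le\hat{n}<2n$. By the power-of-two guarantee, every contiguous block of $S$ of length $p(\hat{n})$ is a UES for $d$-regular graphs of size $\hat{n}$; by the monotonicity above it is therefore also a UES for $d$-regular graphs of size $n$. Defining the cover time $\tilde p(n)=p(\hat{n})$ thus makes $S$ a genuine SUES in the sense of the definition, the defining property now holding for \emph{every} $n\ge 1$ rather than only for powers of two. It then remains to bound $\tilde p$: since $\hat{n}<2n$ and $c(1+1/\lambda)$ is a fixed constant,
\[
\tilde p(n)=p(\hat{n})=O\!\left(\hat{n}^{\,c(1+1/\lambda)}\right)=O\!\left((2n)^{c(1+1/\lambda)}\right)=2^{c(1+1/\lambda)}\cdot O\!\left(n^{c(1+1/\lambda)}\right)=O\!\left(n^{c(1+1/\lambda)}\right),
\]
the constant factor $2^{c(1+1/\lambda)}$ being absorbed into the big-$O$.

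I do not expect a genuine obstacle here, since the argument is purely a reduction and no part of the recursive construction of $S_n$ needs to be revisited. The single point that must be stated cleanly is that the exponent $c(1+1/\lambda)$ is a \emph{constant} (both $c$ and $\lambda$ being fixed), which is precisely what allows the doubling to inflate the cover time by only a constant multiplicative factor rather than changing its order of growth; were $\lambda$ permitted to grow with $n$, this final absorption step would fail.
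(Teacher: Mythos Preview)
Your proposal is correct and is precisely the ``standard double technique'' the paper invokes in a single sentence; you have simply spelled out the rounding-to-the-next-power-of-two argument and the monotonicity of the UES property that the paper leaves implicit. No difference in approach.
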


\begin{remark1}
It is easy to
extend the solutions given for the $d$-regular graphs to
general graphs by using the ideas from \cite{DFKP06,TZ07}.
\end{remark1}

\begin{remark1}
The proposed explicit {\sl SUESs} could be also used to improve the running time of the explicit solution suggested for the rendezvous problem with backtracking in \cite{TZ07}.
\end{remark1}

\section{Conclusion and open problems}

We proposed an improved explicit deterministic solution for the treasure hunt problem with backtracking. More precisely,
we derived an $O(n^{c(1+\frac{1}{\lambda})})$-time algorithm for the treasure hunt,  which significantly
improves the currently best known result with running time $O(n^{2c})$ in \cite{TZ07},
where $c$ is the constant induced from the construction of an universal exploration
sequence in \cite{R05,TZ07}, $\lambda \gg 1$ is an arbitrary large, but fixed, integer constant. In this work, 
we also proposed a much better explicit construction 
for strongly universal exploration sequences compared to the one in \cite{TZ07}. The proposed explicit {\sl SUESs} could be also  used to further improve the time complexity of the explicit solution addressed for the rendezvous problem with backtracking in \cite{TZ07}.

The existence of strongly universal exploration sequences without backtracking
is left as an intriguing open problem. \\

\end{document}